\newtheorem{prop}{Proposition}
\newcommand{\Et}{E_{\text{time}}}
\newcommand{\Eth}{\hat{E}_{\text{time}}}
\newcommand{\El}{E_{\text{lag.}}}
\newcommand{\Elh}{\hat{E}_{\text{lag.}}}
\newcommand{\Yt}{Y_{\text{time}}}
\newcommand{\Yl}{Y_{\text{lag.}}}
\newcommand{\Ylb}{\bar{Y}_{\text{lag.}}}
\newcommand{\Ylh}{\hat{Y}_{\text{lag.}}}
\newcommand{\Ylhs}{{\hat{Y}}^\star_{\text{lag.}}}
\newcommand{\Ul}{U_{\text{lag.}}}
\newcommand{\St}{\Sigma_{\text{time}}}
\newcommand{\Sl}{\Sigma_{\text{lag.}}}
\renewcommand{\l}{\bm {\ell}}
\newlength{\dhatheight}
\newcommand{\doublehat}[1]{%
    \settoheight{\dhatheight}{\ensuremath{\hat{#1}}}%
    \addtolength{\dhatheight}{-0.15ex}%
    \hat{\vphantom{\rule{1pt}{\dhatheight}}%
    \smash{\hat{#1}}}}
\renewcommand{\doublehat}[1]{\breve{#1}}
\title{\LARGE \bf
 Noise reduction in Laguerre-domain discrete delay estimation
}
\author{Mohamed  Abdalmoaty and Alexander Medvedev
\thanks{M. Abdalmoaty ({\small mohamed.abdalmoaty@it.uu.se}) and A. Medvedev ({\small alexander.medvedev@it.uu.se}) are with the Division of Systems and Control, Department of Information Technology, Uppsala University,
				751 05 Uppsala, Sweden}%
}
\begin{document}

\maketitle
\thispagestyle{empty}
\pagestyle{empty}

\begin{abstract}
This paper introduces a stochastic framework for a recently proposed discrete-time delay estimation method in Laguerre-domain, i.e. with the delay block input and output signals being represented by the corresponding Laguerre series. A novel Laguerre domain disturbance model is devised, which allows the involved signals to be square-summable sequences and is suitable in a number of important applications. The relation to two commonly used time-domain disturbance models is clarified. Furthermore, by forming the input signal in a certain way, the signal shape of an additive output disturbance can be estimated and utilized for noise reduction. It is demonstrated that a significant improvement in the delay estimation error is achieved when the noise sequence is correlated. The noise reduction approach is applicable to other Laguerre-domain problems than pure delay estimation.
\end{abstract}

\section{Introduction}

Delays, also termed as dead time, time lag, latency, etc., are ubiquitous in real world and have to be quantified to be properly taken into account in control \cite{Fridman:2004} or estimation applications. Pure delay estimation is instrumental in the remote sensing technology such as radar, sonar, ultrasound and lidar \cite{M92}. All these methods are essentially based on time-of-arrival estimation and make use of emitted pulses with finite support.

Laguerre functions are traditionally used for representing both dynamical systems and signals. In the former case, they are used to approximate the input-output mapping of the system, and, in the latter, to capture the signal forms of the involved inputs and outputs. In time-domain, the  Laguerre functions are essentially exponentials with polynomial coefficients and, therefore, are highly suitable for describing solutions of linear time-invariant systems \cite{W91}.

The idea of using Laguerre functions to estimate delay in continuous- and discrete-time has been investigated before (see, e.g. \cite{IHD01}, \cite{FM99}) but the discrete-time case has drawn less attention until recently. An extensive comparison of delay estimation approaches in simulation experiments was performed in \cite{BL03} and highlighted the robustness of the methods based on the use of Laguerre functions.

The contribution of the present work is threefold: First, a novel disturbance model constituting of a linear combination of a finite number of Laguerre functions with random weights is introduced. Second, an approach to reconstructing the signal shape of additive measurement disturbances through shaping the excitation in Laguerre-domain is proposed. The signal shape reconstruction appears to be most accurate when used on data corrupted by noise generated by the novel disturbance model.
Third, the performance of a time-delay estimation algorithm is improved by applying  Laguerre-domain noise reduction making use of the signal shape reconstruction.

The rest of the paper is organized as follows. After summarizing the necessary background on Laguerre-domain system representation, three stochastic disturbance models are formulated. Further,  making use of the considered models, the impact of disturbance on the estimated Laguerre spectrum of the measured output is analyzed. An approach to reconstructing the signal shape of an additive measurement disturbance realization is described and analyzed for the considered noise models. The disturbance estimate is shown to be instrumental in noise reduction of Laguerre-domain estimation algorithms.
Finally, the efficacy of the proposed noise reduction method is demonstrated on a Laguerre-domain time-delay estimation algorithm via numerical experiments. 

\section{System description}
Consider  the pure discrete-time delay in the face of disturbance\vspace{-.3cm}
\begin{equation}\label{eq:ddelay}
y(t)= u(t-\tau)+e(t), \quad t\in\mathbb{N}_0,
\end{equation}
where $u(t), y(t) \in \mathbb{R}$,  $\tau\in\mathbb{N}_+$ is a constant delay, and $e(t)$ represents unknown noise\footnote{In this paper we use ``noise" and ``disturbance" interchangeably}.

\subsection{Laguerre spectrum}

Let $\mathbb{H}_d^2$ be the Hardy space of analytic functions on the complement of the unit disc that are square-integrable on the unit circle and equipped with  the inner product 
\begin{align}\label{eq:inner_disc}
\langle W, V \rangle = \frac{1}{2\pi i}\oint_{D} \ W(z)V(z^{-1})~\frac{{\mathrm d}z}{z},
\end{align}
where  $D$ is  the unit circle. An orthonormal complete basis in  $\mathbb{H}^2_d$ is given by the  discrete Laguerre functions specified in  $\mathscr Z$-domain  by
\begin{equation}\label{eq:Laguerre}
    L_j(z;p)=\frac{\sqrt{1-p}}{z-\sqrt{p}}T^j(z;p), \quad T(z;p)\triangleq \frac{1-\sqrt{p}z}{z-\sqrt{p}}, 
\end{equation}
for all $j \in \mathbb{N}$, where the constant $0<p<1$ is the discrete Laguerre parameter. Then,  any function $W \in \mathbb{H}^2_d$  can be represented as an infinite series
\begin{equation}\label{eq:series}
W(z)=\sum_{k=0}^\infty w_k L_k(z;p),  \quad  w_j=\langle W, L_j \rangle ,
\end{equation} 
and the set $\{w_j\}_{j\in\mathbb{N}}$ is referred to as the {\em Laguerre spectrum} of $W$.
A system is said to be considered in Laguerre-domain when its inputs and outputs  are given by their Laguerre spectra.

The time-domain representations of the Laguerre functions $\ell_j(t;p)=\mathscr{Z}^{-1} \left\{L_j(z;p)\right\}$ ($j\in\mathbb{N}$) yield an orthonormal basis in $ \l^2\lbrack 0,\infty)$, the space of square-summable sequences defined for non-negative integer arguments, where $\mathscr{Z}^{-1}$ denotes the inverse $\mathscr{Z}$-transform.

\subsection{Linear time-invariant system in Laguerre-domain}
Consider the linear time-invariant (LTI) system with 
\begin{align}
    x(t+1)&=Ax(t)+Bu(t), \ \forall t \in \mathbb{N}_0, \label{eq:LTI} \\
    y(t)&=Cx(t), \nonumber
\end{align}
where $x : \mathbb{R} \to \mathbb{R}^n$, $A$, $B$, $C$ are real matrices of suitable dimensions, $x(0)=0$. 
\begin{prop}{\cite{Fischer:1998}}
Let the input signal $u(t)\in\l^2\lbrack 0,\infty)$ to system \eqref{eq:LTI} be defined by its Laguerre spectrum $\{u_j\}_{j\in\mathbb{N}}$. Then the output Laguerre spectrum  $\{y_j\}_{j\in\mathbb{N}}$ is given by the output of the system 
\begin{align}\label{eq:LTI_laguerre}
x_{j+1}&= Fx_j +Gu_j,  \\
y_j&= Hx_j+ {Ju_j}, \nonumber 
\end{align}
where 
\begin{align*}
F&= {(I - \sqrt{p}A)}^{-1}(A-\sqrt{p}I )\\ 
G&= (1-p){(I- \sqrt{p}A)}^{-1}B, \\
H&= C{(I- \sqrt{p}A)}^{-1}, \\
{J}& = { \sqrt{p}~C{(I- \sqrt{p}A)}^{-1}B},
\end{align*}
$I$ is the identity matrix, and $p$ is the Laguerre parameter.
\end{prop}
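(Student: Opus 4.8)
The plan is to pass to the $\mathscr{Z}$-domain and exploit that the discrete Laguerre functions in \eqref{eq:Laguerre} are generated from $L_0$ by the all-pass factor $T(\cdot;p)$, namely $L_j=T^jL_0$, so that $T^mL_j=L_{j+m}$ for all $j,m\in\mathbb{N}$. Under the (implicit) assumption that $A$ is Schur --- so that $u\in\l^2[0,\infty)$ indeed produces $y\in\l^2[0,\infty)$ and $W(z):=C(zI-A)^{-1}B$ is a bounded multiplier on $\mathbb{H}^2_d$ --- the zero-initial-state input--output relation of \eqref{eq:LTI} reads $Y(z)=W(z)U(z)$, where by \eqref{eq:series} the $\mathscr{Z}$-transform of $u$ is $U(z)=\sum_{j\in\mathbb{N}}u_jL_j(z;p)$. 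Continuity of multiplication by $W$ and of the inner product \eqref{eq:inner_disc} then give $y_\ell=\langle Y,L_\ell\rangle=\sum_{j}u_j\langle WL_j,L_\ell\rangle$, so the whole claim reduces to identifying the scalars $\langle WL_j,L_\ell\rangle$.

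The key step is to rewrite $W$ as a function of the all-pass variable $T=T(z;p)$. Inverting the M\"obius relation in \eqref{eq:Laguerre} yields $z=(1+\sqrt p\,T)/(\sqrt p+T)$, hence $zI-A=\bigl[(I-\sqrt pA)+T(\sqrt pI-A)\bigr]/(\sqrt p+T)$; since $A$ commutes with $I-\sqrt pA$, this gives
\begin{equation*}
(zI-A)^{-1}=(\sqrt p+T)\,(I-\sqrt pA)^{-1}(I-TF)^{-1},
\end{equation*}
and therefore $W(z)=(\sqrt p+T)\,H\,(I-TF)^{-1}B$ with $H=C(I-\sqrt pA)^{-1}$ and $F$ as in the statement. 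Expanding $(I-TF)^{-1}=\sum_{m\ge0}T^mF^m$ --- which converges uniformly on $\lvert z\rvert\ge1$, since $\lvert T\rvert\le1$ there and the spectral radius of $F$ is strictly below one --- collecting equal powers of $T$, and using the identities $\sqrt p HB=J$ and $(\sqrt pF+I)B=(1-p)(I-\sqrt pA)^{-1}B=G$, one obtains the expansion
\begin{equation*}
W(z)=J+\sum_{m\ge1}T^m(z;p)\,HF^{m-1}G .
\end{equation*}

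It remains to substitute this into $Y(z)=\sum_{j}u_jW(z)L_j(z)$ and use $T^mL_j=L_{j+m}$, which gives $Y(z)=\sum_{j}u_j\bigl(JL_j+\sum_{m\ge1}HF^{m-1}G\,L_{j+m}\bigr)$. Taking the inner product with $L_\ell$ --- for which only the finitely many indices $j\le\ell$ contribute --- yields $\langle WL_j,L_\ell\rangle=J$ if $j=\ell$, $HF^{\ell-j-1}G$ if $j<\ell$, and $0$ if $j>\ell$, so $y_\ell=Ju_\ell+\sum_{m=1}^{\ell}HF^{m-1}Gu_{\ell-m}$. This is precisely the value at index $\ell$ of the solution of $x_{j+1}=Fx_j+Gu_j$, $y_j=Hx_j+Ju_j$ with $x_0=0$, i.e. of \eqref{eq:LTI_laguerre}, completing the proof. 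The main obstacle is the second step: the change of variable and the accompanying matrix manipulations must be carried out carefully --- in particular justifying the Neumann-type expansion of $(I-TF)^{-1}$ and the interchange of the two summations --- and one must be explicit about the Schur assumption on $A$ that makes the Laguerre-domain model well posed; once this is settled, the remaining bookkeeping is routine.
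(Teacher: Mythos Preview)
The paper does not supply its own proof of this proposition; it is simply quoted from \cite{Fischer:1998}. So there is nothing to compare your argument against in the present manuscript.

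That said, your proof is correct and self-contained. The M\"obius change of variable $z=(1+\sqrt p\,T)/(\sqrt p+T)$ leading to $(zI-A)^{-1}=(\sqrt p+T)(I-\sqrt pA)^{-1}(I-TF)^{-1}$, the Neumann expansion in powers of $T$, and the bookkeeping identities $\sqrt p\,HB=J$ and $(\sqrt pF+I)B=G$ all check out, and the shift property $T^mL_j=L_{j+m}$ gives the lower-triangular relation $y_\ell=Ju_\ell+\sum_{m=1}^{\ell}HF^{m-1}Gu_{\ell-m}$, i.e.\ exactly \eqref{eq:LTI_laguerre} with $x_0=0$. Two minor remarks: first, in the factorization of $(zI-A)^{-1}$ you silently use that $F$ and $(I-\sqrt pA)^{-1}$ commute (both are rational functions of $A$), which is why the order of the two factors is immaterial --- worth stating since otherwise the displayed order looks wrong; second, your claim that $F$ has spectral radius below one follows because $\lambda\mapsto(\lambda-\sqrt p)/(1-\sqrt p\,\lambda)$ is a Blaschke factor and hence maps the open unit disk to itself, which you might spell out. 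Finally, the initial condition $x_0=0$ for \eqref{eq:LTI_laguerre} is implicit in the statement but falls out of your derivation; it would be cleaner to make it explicit.
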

An important implication of the above result is that the system description of \eqref{eq:LTI_laguerre}  possesses a ``casuality" property in   Laguerre-domain; The throughput term expressed by the matrix $J$ directly relating the input coefficient $u_j$ to the output coefficient $y_j$ is always present but the coefficients of higher order, i.e. $u_k, j<k$ do not contribute to the value of $y_j$. This is despite the fact that each Laguerre coefficient is, according to \eqref{eq:series}, evaluated from the whole signal sequence defined on $\mathbb{N}_0$. This property of the Laguerre-domain description will be exploited in Section~\ref{sec:estimation}.

\subsection{Discrete delay in Laguerre-domain}
Consider now a noise-free case of \eqref{eq:ddelay}, i.e. let $e(t)\equiv 0$
\begin{equation}\label{eq:delay}
y(t)= u(t-\tau).
\end{equation}
Then the operation of the delay block in Laguerre-domain is readily described by the following result.
\begin{prop}[\cite{MBU20}]\label{prop:delay_Laguerre}
Let the input and output signals of \eqref{eq:delay} be 
\[ U(z)=\sum_{k=0}^\infty u_k L_k(z), \quad Y(z)=\sum_{k=0}^\infty y_k L_k(z).
\]
Then the Laguerre spectrum of the output is related to that of the input by 
\begin{align} \label{eq:delay-coeffs}
    y_j = (1-p)\sum_{k=0}^{j-1}
    L_{j-k}^{(\tau)}(\sqrt{p})u_k + \sqrt{p}^\tau u_j,
    \end{align}
where
\[
L_m^{(\tau)}(\sqrt{p}) = (-\sqrt{p})^{m-\tau}\sum_{n=0}^{\tau-1}\binom{m+n}{n}\binom{m-1}{\tau-n-1}(-p)^n,
\]
and it is agreed that $\binom{n}{k}=0$ for $k>n$ by definition.
\end{prop}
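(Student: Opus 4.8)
The plan is to pass to the $\mathscr Z$-domain, where the delay acts as multiplication by $z^{-\tau}$, so that $Y(z)=z^{-\tau}U(z)$, and to read off the output coefficients from $y_j=\langle Y,L_j\rangle$. Inserting the Laguerre series of $U$ and using bilinearity of \eqref{eq:inner_disc} gives $y_j=\sum_{k\ge 0}c_{j,k}u_k$ with transport coefficients $c_{j,k}\triangleq\langle z^{-\tau}L_k,L_j\rangle$, so the whole statement reduces to computing $c_{j,k}$ in closed form. The device that makes this tractable is the all-pass identity $T(z;p)\,T(z^{-1};p)=1$, an immediate consequence of \eqref{eq:Laguerre}; it yields $L_j(z^{-1})=\sqrt{1-p}\,z\,(1-\sqrt{p}\,z)^{-1}T^{-j}(z;p)$, and substituting this together with $L_k(z)=\sqrt{1-p}\,(z-\sqrt{p})^{-1}T^{k}(z;p)$ into the contour integral for $c_{j,k}$ collapses the two Blaschke factors and the $1/z$ weight to
\begin{equation}\label{eq:cjk}
c_{j,k}=\frac{1-p}{2\pi i}\oint_{D}\frac{z^{-\tau}\,T^{\,k-j}(z;p)}{(z-\sqrt{p})(1-\sqrt{p}\,z)}\,{\mathrm d}z .
\end{equation}

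I would then evaluate \eqref{eq:cjk} by residues, splitting on the sign of $k-j$. Writing $T$ out, the integrand of \eqref{eq:cjk} is $(1-p)\,z^{-\tau}(1-\sqrt{p}\,z)^{k-j-1}(z-\sqrt{p})^{-(k-j+1)}$. For $k>j$ this is a rational function whose only finite poles, at $z=0$ and $z=\sqrt{p}$, lie inside $D$, and which is analytic on $|z|\ge 1$ and decays like $z^{-\tau-2}$ as $z\to\infty$ (this is where $\tau\ge 1$ enters); deforming $D$ outward to a circle of radius $R\to\infty$ therefore gives $c_{j,k}=0$, which is exactly the Laguerre-domain ``causality'' already noted after Proposition~1. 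For $k=j$ the integrand is $(1-p)\,z^{-\tau}\bigl[(z-\sqrt{p})(1-\sqrt{p}\,z)\bigr]^{-1}$, which again decays like $z^{-\tau-2}$ and now has, outside $D$, the single pole $z=1/\sqrt{p}$; the outward deformation thus leaves $c_{j,j}=-\operatorname{Res}_{z=1/\sqrt{p}}$ of the integrand, and a one-line evaluation using $1/\sqrt{p}-\sqrt{p}=(1-p)/\sqrt{p}$ yields $c_{j,j}=\sqrt{p}^{\tau}$, the throughput term in \eqref{eq:delay-coeffs}.

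The remaining case is $k<j$; set $m=j-k\ge 1$. Then $T^{\,k-j}=T^{-m}$ cancels one factor $z-\sqrt{p}$ against the denominator, the integrand of \eqref{eq:cjk} becomes $(1-p)\,z^{-\tau}(z-\sqrt{p})^{m-1}(1-\sqrt{p}\,z)^{-(m+1)}$, and its only singularity inside $D$ is the pole of order $\tau$ at the origin. Hence $c_{j,k}$ equals $(1-p)$ times the coefficient of $z^{\tau-1}$ in the Taylor expansion about $z=0$ of $(z-\sqrt{p})^{m-1}(1-\sqrt{p}\,z)^{-(m+1)}$, which I would obtain by multiplying the binomial series $(1-\sqrt{p}\,z)^{-(m+1)}=\sum_{n\ge 0}\binom{m+n}{n}(\sqrt{p})^{n}z^{n}$ against the finite expansion of $(z-\sqrt{p})^{m-1}$, collecting the terms of degree $\tau-1$, and reindexing so the surviving sum runs over $n=0,\dots,\tau-1$. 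Matching the resulting powers of $\sqrt{p}$ and the signs against the stated $L^{(\tau)}_m(\sqrt{p})$ closes the argument and yields $c_{j,k}=(1-p)L^{(\tau)}_{j-k}(\sqrt{p})$. I expect the main obstacle to be precisely this bookkeeping: aligning the index shift and the $(-1)$- and $(-\sqrt{p})$-powers with the stated closed form, and verifying that the convention $\binom{n}{k}=0$ for $k>n$ trims the sum correctly at both ends (e.g.\ for small $m$ or large $\tau$). As a cross-check --- or an alternative proof --- one can realize $z^{-\tau}$ as the nilpotent state-space model with $A$ a single shift on $\mathbb{R}^{\tau}$, $B=e_1$, $C=e_{\tau}^{\top}$, apply Proposition~1, and identify $c_{j,j}=J$ and $c_{j,k}=HF^{\,j-k-1}G$; this makes the causality manifest but still reduces the off-diagonal coefficients to computing powers of $F=(I-\sqrt{p}A)^{-1}(A-\sqrt{p}I)$, i.e.\ to essentially the same combinatorial identity.
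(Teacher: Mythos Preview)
The paper does not prove this proposition; it is quoted from \cite{MBU20} and stated without argument, so there is nothing to compare against at the level of method. Your argument is correct and complete: the all-pass identity $T(z)T(z^{-1})=1$ gives the collapse to \eqref{eq:cjk}; for $k>j$ the integrand has no poles on or outside $D$ and is $O(z^{-\tau-2})$, so the outward deformation yields $c_{j,k}=0$; for $k=j$ the single exterior pole at $z=1/\sqrt{p}$ contributes $-\operatorname{Res}=\sqrt{p}^{\,\tau}$ exactly as you compute; and for $k<j$, $m=j-k$, the coefficient of $z^{\tau-1}$ in $(z-\sqrt{p})^{m-1}(1-\sqrt{p}z)^{-(m+1)}$ obtained from the Cauchy product, after the substitution $r=\tau-1-n$, is precisely $(-\sqrt{p})^{m-\tau}\sum_{n=0}^{\tau-1}\binom{m+n}{n}\binom{m-1}{\tau-n-1}(-p)^n=L_m^{(\tau)}(\sqrt{p})$, so $c_{j,k}=(1-p)L_{j-k}^{(\tau)}(\sqrt{p})$. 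The binomial convention $\binom{n}{k}=0$ for $k>n$ indeed trims the sum correctly at the $r<0$ and $r>m-1$ ends. Your state-space cross-check via Proposition~1 is also the link the paper itself records in \eqref{eq:markov}.
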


Naturally, the discrete delay operator is an LTI system and can be written in state-space form \eqref{eq:LTI}. Then the delay length $\tau$ becomes the order of the state-space representation. By transforming \eqref{eq:delay} to Laguerre-domain, the delay estimation problem can be formulated as a parameter estimation problem and solved in a closed form, see \cite{M22} for details.

The role of the polynomials $L_m^{(\tau)}$ in the Laguerre-domain description of the delay operator is revealed by the convolution operator in \eqref{eq:delay-coeffs}. Indeed, the following relationship holds
\begin{equation}\label{eq:markov}
 (1-p) L_j^{(\tau)}(\sqrt{p})= h_j \triangleq  H_\tau F_\tau^{j-1}G_\tau, \quad j=1,2,\dots,
 \end{equation}
where $H_\tau, F_\tau, G_\tau$ are the matrices  given by \eqref{eq:LTI_laguerre} and evaluated for  delay operator \eqref{eq:delay}.

\section{Measurement noise in Laguerre-domain}\label{sec:noise}
There is no established noise model for Laguerre-domain system representations. Below, three  models are analyzed. The first two are conventional and defined in  time-domain, while the third is novel and introduced directly in Laguerre-domain.

\subsection{White noise}
The white noise model is typically used in communication systems to represent channel noise, in radar/sonar to describe the ambient noise, and to capture electronic noise  in solid-state electronics \cite{kay2013fundamentals}. 

Define the time-domain noise vector
\begin{equation}\label{eq:time_domain_noise}
\Et \triangleq   \begin{bmatrix}e(0) & e(1) & \dots &e(T-1)\end{bmatrix}^\intercal,
\end{equation}
and denote its covariance matrix as $\St$. Then the Laguerre spectrum, which is also further on referred to as (spectrum) distortion, is given by
\[
\begin{aligned}
\El & =\begin{bmatrix}e_0 & e_1 & \dots &e_{L-1}\end{bmatrix}^\intercal \triangleq \Psi_p \Et,
\end{aligned}
\]
where
\[
\Psi_p = {\left(\Phi_L^\intercal(p) \Phi_L(p)\right)}^{-1}\Phi_L^\intercal(p)
\]
is the projection matrix onto the space spanned by the first $L$ Laguerre functions, and the  matrix $\Phi_L(p)\in \mathbb{R}^{T\times L}$ contains the first $T$ instants of the Laguerre functions in time-domain. 
We will now show that when $e(t)$ is stationary and white, the entries of $\El$ are approximately uncorrelated with approximately the same variance as that of the time-domain noise. This is due to the orthonormality of the Laguerre functions.

\medskip

\begin{prop}\label{pr:white}
Suppose that  the time-domain noise sequence $\{e(t), t =  0, \dots, T-1\}$ is uncorrelated with zero mean and variance $\lambda$. Then, for a sufficiently large $T$,  the sequence of the Laguerre-domain coefficients $\{e_k, = 0,\dots, L-1 \}$ has arbitrary small correlations, and for each $k$ the variance $\mathbb{E}[e_k^2] = \lambda_L \approx \lambda$. 
\end{prop}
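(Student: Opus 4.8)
The plan is to analyze the covariance matrix of $\El = \Psi_p \Et$ directly. Since $\mathbb{E}[\Et \Et^\intercal] = \lambda I_T$ by hypothesis, we have
\begin{equation*}
\mathbb{E}[\El \El^\intercal] = \lambda \, \Psi_p \Psi_p^\intercal = \lambda \, {\left(\Phi_L^\intercal \Phi_L\right)}^{-1}\Phi_L^\intercal \Phi_L {\left(\Phi_L^\intercal \Phi_L\right)}^{-1} = \lambda \, {\left(\Phi_L^\intercal(p) \Phi_L(p)\right)}^{-1}.
\end{equation*}
So the entire claim reduces to showing that $\Phi_L^\intercal(p)\Phi_L(p) \to I_L$ as $T\to\infty$. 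The $(i,j)$ entry of this Gram matrix is $\sum_{t=0}^{T-1}\ell_i(t;p)\ell_j(t;p)$, which is exactly the truncated inner product of the time-domain Laguerre functions. The first key step is therefore to invoke the orthonormality of $\{\ell_j(t;p)\}$ in $\l^2[0,\infty)$ stated in the excerpt: the infinite sum equals $\delta_{ij}$, hence the truncation tail $\sum_{t=T}^{\infty}\ell_i(t;p)\ell_j(t;p)$ can be made arbitrarily small uniformly over the finitely many pairs $i,j \in \{0,\dots,L-1\}$ by taking $T$ large. This gives $\Phi_L^\intercal(p)\Phi_L(p) = I_L + \Delta_T$ with $\|\Delta_T\|\to 0$.

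The second step is to push this through the matrix inverse. By continuity of inversion at $I_L$ (or a Neumann-series bound ${(I_L+\Delta_T)}^{-1} = I_L - \Delta_T + O(\|\Delta_T\|^2)$, valid once $\|\Delta_T\|<1$), we get $\mathbb{E}[\El\El^\intercal] = \lambda\left(I_L + O(\|\Delta_T\|)\right)$. Reading off the entries: the off-diagonal terms $\mathbb{E}[e_ie_j]$, $i\neq j$, are $O(\lambda\|\Delta_T\|)$ and hence arbitrarily small, while each diagonal term $\mathbb{E}[e_k^2] = \lambda_L = \lambda(1+O(\|\Delta_T\|)) \approx \lambda$. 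This is precisely the assertion of the proposition, so no further work is needed beyond making the ``sufficiently large $T$'' quantifier explicit in terms of the desired correlation/variance tolerance.

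The main obstacle is controlling the decay rate of the Laguerre tail $\sum_{t\geq T}\ell_i(t;p)\ell_j(t;p)$ uniformly in $i,j$: since the discrete Laguerre functions decay like $\sqrt{p}^{\,t}$ times a polynomial in $t$ of degree up to $L-1$, one should argue that for fixed $p<1$ and fixed $L$ this tail is $O(T^{2L}\,p^{T})\to 0$, so the convergence is in fact geometric. A secondary, purely expository point is the phrase ``for each $k$ the variance is $\lambda_L$'' — strictly the diagonal entries of $\lambda{(\Phi_L^\intercal\Phi_L)}^{-1}$ need not all be identical for finite $T$, so I would either read $\lambda_L$ as a common approximate value (all diagonal entries lie within $O(\lambda\|\Delta_T\|)$ of $\lambda$) or note that the statement is an approximation, consistent with the ``$\approx$'' already in the claim. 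Everything else is routine linear algebra.
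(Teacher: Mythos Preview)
Your proposal is correct and follows essentially the same route as the paper: compute $\mathbb{E}[\El\El^\intercal]=\lambda(\Phi_L^\intercal\Phi_L)^{-1}$, use orthonormality of the discrete Laguerre functions to show the Gram matrix converges entrywise to $I_L$, and then invoke continuity of the matrix inverse. The paper's argument is actually terser than yours (it does not spell out the Neumann bound or the geometric tail decay), so your additional remarks on the tail rate and on the meaning of the common value $\lambda_L$ are refinements rather than departures.
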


\begin{proof} For a finite $T$,  $\Et$ is square-summable and, therefore, can be described by an infinite Laguerre series (cf. \eqref{eq:series}), due to completeness of the basis. 
By assumption,  $\St = \lambda I_T$. Thus, by definition $\Sl(p) = \lambda \Psi_p \Psi_p^\intercal = \lambda (\Phi_L(p)^\intercal\Phi_L(p))^{-1}$.
The $(k,l)$-th entry of $\Phi_L(p)^\intercal\Phi_L(p)$ is given by
$\sum_{t=0}^{T-1} \ell_k(t;p)\ell_l(t;p)$ which converges, as $T\to\infty$, to 0 if $k\neq l$ and to 1 if $k=l$. By the continuity of the inverse operator, it holds that, for any arbitrary small $\epsilon >0$ and any $p$, there exists a sufficiently large $T$ such that $\|\Psi_p \Psi_p^T - I_L \| < \epsilon$
\end{proof}

\medskip

The  assumption of  noise stationarity is significant here:
When the time-domain noise is uncorrelated but non-stationary, i.e. has time-varying variances $\lambda_t$, the sequence of the Laguerre-domain coefficients $e_0, \dots, e_{L-1}$ is correlated. This can be easily seen by observing that when $\St = \Lambda$, a diagonal matrix whose $t$-th diagonal entry is $\lambda_t$, the covariance $\Sl(p) = \Psi_p \Lambda\Psi_p^\intercal = (\Phi_L^\intercal\Phi_L)^{-\intercal}\Phi_L^\intercal\Lambda\Phi_L(\Phi_L^\intercal\Phi_L)^{-\intercal}$. Notice that the $(k,l)$-th entry of the matrix $\Phi_L^T\Lambda\Phi_L$ in this case is 
\[
[\Phi_L^T\Lambda_t\Phi_L]_{kl} = \sum_{t=0}^{T-1} \lambda_t \ell_k(t;p) \ell_l(t;p).
\]
A main conclusion of Proposition~\ref{pr:white} is that, for a white $e(t)$, the second-order properties of the Laguerre-domain distortion are independent of the Laguerre parameter $p$ for sufficiently large $T$.

\subsection{Colored noise}
Stationary correlated (colored) disturbances are usually modeled as filtered white noise. There are a variety of filter structures ranging from autoregressive filters to rational transfer operators that can be employed in the modeling; see e.g., \cite{Ljung1999}. Regardless of the used model structure, and under zero mean assumption, the second-order properties of colored noise are given by its correlation function. It can be used to construct a full covariance matrix $\St$ of the noise vector $\Et$. 

The Cholesky factorization of a known positive-definite covariance $\St$ is 
\[
\St = \lambda S S^\intercal,
\]
where $\lambda>0$ is the variance of $e(t)$ and $S$ is a lower-triangular matrix. The columns of $S$ are given by the impulse response of the spectral factor of the noise process.
Then it holds that
\[
\Sl(p) = \lambda \Psi_p S S^\intercal \Psi_p^\intercal = \lambda \tilde{\Psi}_p \tilde{\Psi}_p ^\intercal,
\]
and a convolution between the basis functions and the impulse response of the spectral factor of the noise takes place. In general, $\tilde{\Psi}_p \tilde{\Psi}_p ^T$ is a full matrix and does not converge to the identity matrix as $T\to\infty$. Therefore, in contrast with the case of white noise, the second-order properties of $\El$  depend on $p$, and, in general, the entries of $\El$ will be correlated. This is demonstrated in Fig.~\ref{fig:cov_laguerre_correlated}

\begin{figure}
    \centering
    \includegraphics[width=0.45\textwidth]{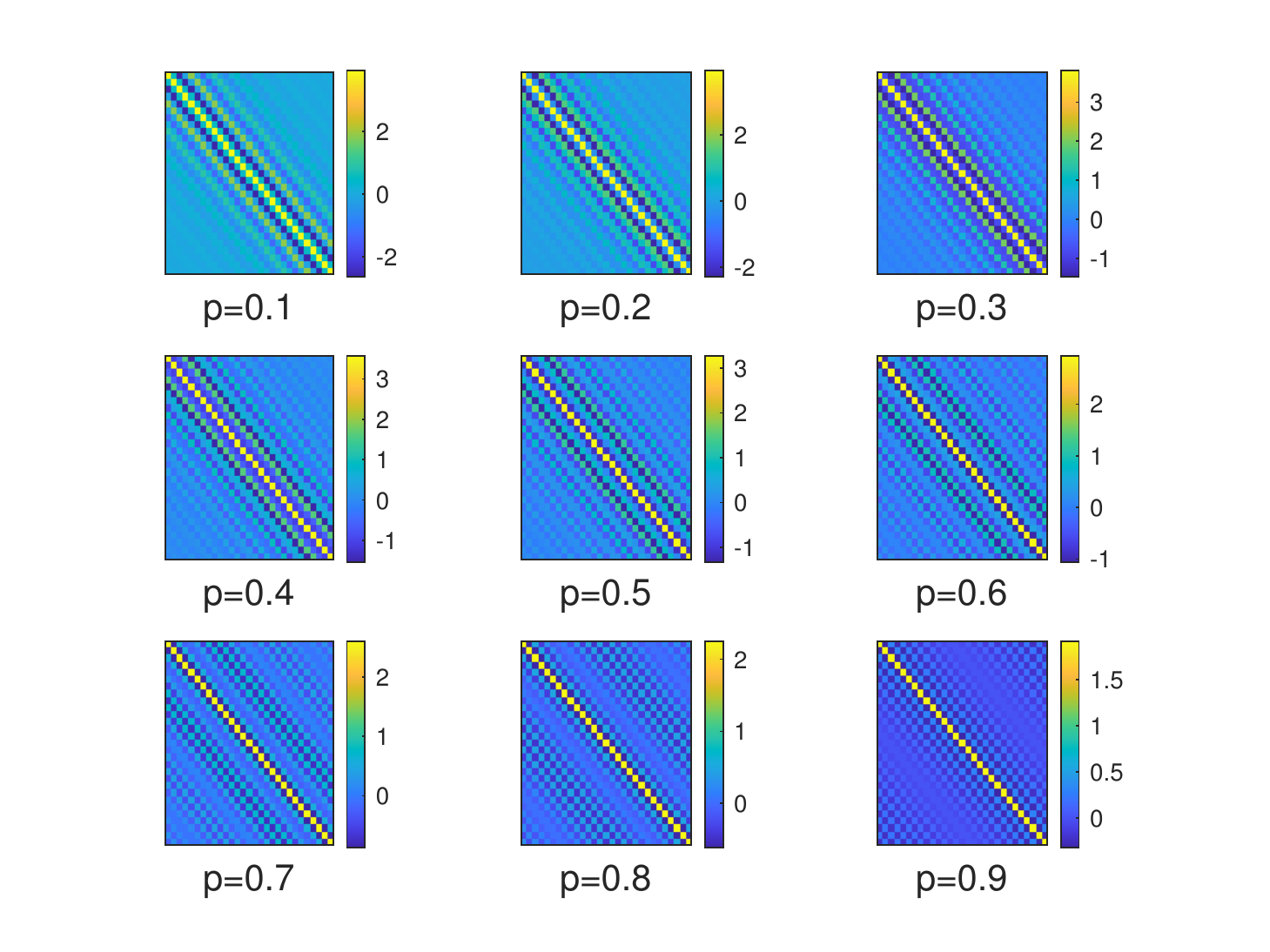}
    \caption{Covariance matrix of the Laguerre distortion $\El$ for values of the Laguerre parameter $p\in \lbrack 0.1, 0.9\rbrack$. Here $L=29$ and $T=1500$. The time-domain noise vector is a subsequence of a unit variance stationary white noise filtered through a second-order transfer function with unit gain, complex poles $0.4732 \pm 0.7190i$ and no zeros. }
    \label{fig:cov_laguerre_correlated}
\end{figure}

\subsection{Random combination of Laguerre functions}
 When no assumptions are made regarding the realizations of $e(t)$, the measured signal $\{y(t), t\in \lbrack 0, T-1 \rbrack\}, T\to\infty$, may not be in $\l^2\lbrack0,\infty)$.
 It will be almost surely in $\l^2$ only if the realizations of $e(t)$ are almost surely in $\l^2$. This property can be guaranteed  if, for instance, $e(t)$ is defined as a stochastic process given by a random combination of a finite number of Laguerre functions
\begin{equation}\label{eq:random_Laguerre_process}
e(t) \triangleq \sum_{k=0}^K \mathrm{e}_k \ell_k(t; p_e), 
\end{equation}
where $K\in \mathbb{N}_+$, $0<p_e<1$, and $\mathrm{e}_0, \dots, \mathrm{e}_K$ are random variables with zero mean and finite variances. This is in contrast with \eqref{eq:series}, where the Laguerre coefficients are constant.
Model \eqref{eq:random_Laguerre_process} is related to what is employed in the Karhunen–Lo{\`e}ve expansion (\cite[Ch.3, Sec. 4]{wong2012stochastic}). Both models decouple the probabilistic behavior of the signal from its behavior in time.   Yet, the models differ in character: The Karhunen–Lo{\`e}ve expansion appears in a representation theorem applicable to an infinite sum and requires finite support of the orthonormal functions, while (\ref{eq:random_Laguerre_process}) is a defining model. Another important difference is that in (\ref{eq:random_Laguerre_process}) the random variables $\mathrm{e}_k$ are not necessarily  uncorrelated. 

Noise model \eqref{eq:random_Laguerre_process} produces a non-stationary second-order process, which comes in contrast to the stationary models usually used to model noise in time-domain. Its correlation function is given by
\[
\begin{aligned}
R_e(t,s) = \mathbb{E}[e(s)e(t)] &= \sum_{m=0}^K\sum_{n=0}^K \sigma_{mn} \ell_m(s; p_e) \ell_n(t; p_e),\\
\end{aligned}
\]
where $\sigma_{mn}$ is the correlation between $\mathrm{e}_m$ and $\mathrm{e}_n$. For mutually uncorrelated coefficients $\mathrm{e}_k$, this expression reduces to
\[
\begin{aligned}
R_e(t,s) = \sum_{k=0}^K \mathbb{E}[\mathrm{e}_k^2]\ell_k(s; p_e) \ell_k(t; p_e).
\end{aligned}
\]

 Notably,  model \eqref{eq:random_Laguerre_process} is able to capture projections of both white and colored time-domain finite  noise subsequences  onto the space spanned by the first $K+1$ Laguerre functions. In other words, in this subspace and with the freedom to model the correlations matrix of $\El$, it is sufficient to assume model \eqref{eq:random_Laguerre_process} irrespective of the actual time-domain correlation properties. 
 
 The noise models are illustrated in Fig.~\ref{fig:real_lag_model}, and also utilized  in Section~\ref{sec:delay_estimation}.

\begin{figure}
    \centering
    \includegraphics[width=0.47\textwidth]{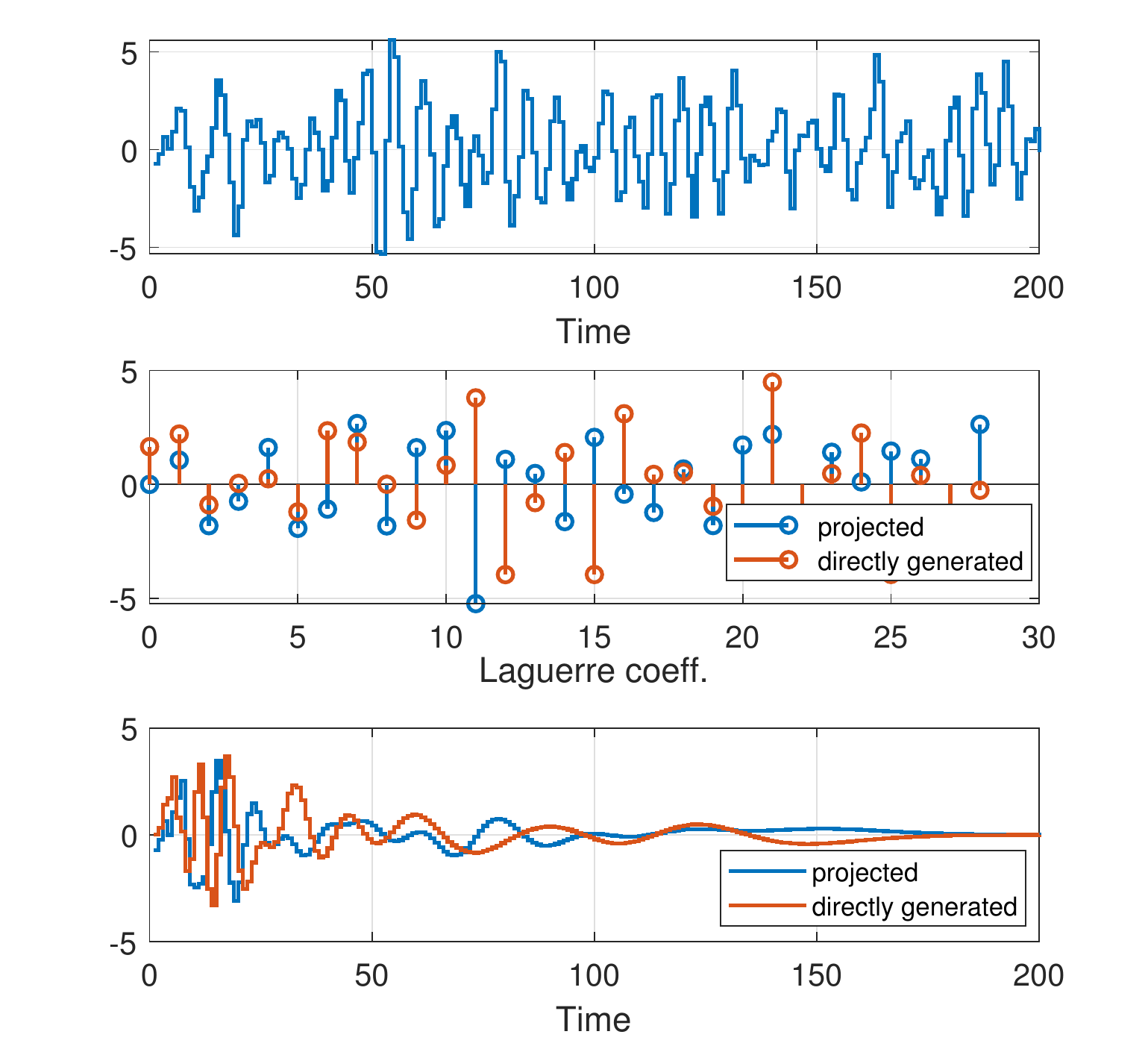}
    \caption{Top panel: time-domain colored noise. Middle panel: corresponding Laguerre-domain distortion (blue) and a random  distortion directly generated in Laguerre-domain ($p=0.5$). Lower panel: reconstructed time-domain noise using coefficients from the middle panel (model (\ref{eq:random_Laguerre_process}), $p_e = p$). The difference between the blue signals in the top and lower panels is the projection residual.} 
    \label{fig:real_lag_model}
\end{figure}

\section{Noise-corrupted Laguerre spectrum}
Introduce the vectors 
\[
\begin{aligned}
\Yt &\triangleq
\begin{bmatrix}
y(0)&y(1)& \dots& y(T-1)
\end{bmatrix}^\intercal, \\
\Yl &\triangleq
\begin{bmatrix}
y_0& y_1& \dots &y_{L-1}
\end{bmatrix}^\intercal.
\end{aligned}
\]
Then, in terms of the noise-free output $\Yt$, an approximation of the first $L$ Laguerre spectrum coefficients $\Yl$ from the data in the interval $\lbrack 0, T-1\rbrack$ is given by
\[
\Ylb = \Psi_p \Yt.
\]
The errors $\mathcal{E} = \Yl - \Ylb$ are truncation errors (residuals), due to the contribution from the Laguerre coefficients of  orders higher than $L$. Notice that the dependence of this error on $p$ is implicit in the notation.

When the measurement is noisy, only  $Y_\text{meas.} = \Yt + \Et$ is available, where $\Et$ is defined in (\ref{eq:time_domain_noise}) and $Y_\text{meas.}$ is a vector stacking the measured outputs. This introduces further errors, and the approximation becomes
\begin{equation}\label{eq:noisy-laguerre-spectrum}
\begin{aligned}
\!\!\!\!\! \Ylh \triangleq \Psi_p Y_\text{meas.} =\Psi_p (\Yt+\Et) &= \Ylb + \Psi_p \Et\\
& = \Ylb + \El\\
&= \Yl + \mathcal{E} + \El,
\end{aligned}
\end{equation}
where $\El$ represents the distortion in the first $L$ Laguerre spectrum coefficients of $y(t)$, due to the measurement noise, and its covariance matrix is given as
\begin{equation}\label{eq:lag_noise_cov}
\Sl = \mathbb{E}[\El\El^\intercal] = \Psi_p \St \Psi_p^\intercal .
\end{equation}
From (\ref{eq:delay-coeffs}), the following relation then holds
\[
\Ylh = M_\tau(p) \Ul + \mathcal{E} + \El,
\]
where $\Ul$ is a column vector stacking the Laguerre spectrum coefficients of the input $u(t)$ (which are usually known by design), and $M_\tau(p)$ is the matrix of Markov parameters. In other words,
\begin{equation}\label{eq:noisy-delay-coeffs}
\begin{aligned}
\hat{y}_j&= (1-p)\sum_{k=0}^{j-1}
    L_{j-k}^{(\tau)}(\sqrt{p})u_k + \sqrt{p}^\tau u_j+ \epsilon_j + e_j, \\
    & \text{in which } j =0, \dots, L-1.
\end{aligned}
\end{equation}
By the completeness of the Laguerre basis in $\l^2[0,\infty)$, the norm of the truncation error goes to zero as $L$ grows towards infinity. Notice that the truncation is not a problem when the input spectrum is finite. Even with a significant ``tail" in the noise-free output, all the relations hold if the output spectrum is evaluated from a complete realization. However, unlike the truncation error, the distortion due to measurement noise persists even when $L$ and $T$ are large. In what follows, we will assume that $\epsilon_j = 0$ for all $j$.

\section{Signal estimation in Laguerre-domain}\label{sec:estimation}

An important consequence of the fact that a state-space time-domain LTI model equivalently translates into a state-space LTI description in Laguerre-domain is that ``causality" also applies to the latter, \cite{NY82}. Since the Laguerre spectrum of a signal is calculated from its time-domain values on $\lbrack 0, \infty)$, causality does not hold in a temporal sense but rather means that the Laguerre coefficients $y_j$ in \eqref{eq:LTI_laguerre} are independent of $u_i, i>j$.

Consider now \eqref{eq:LTI} with output measurements corrupted by additive noise. Assuming that the input signal is formed so that $u_i=0, i=0,\ldots,m-1$, for some $m\in \mathbb{N}_+$, the first $m$ coefficients of the output, i.e. $y_i, i=0,\ldots,m-1$, are independent of the input and constitute instead the first $m$ coefficients of the Laguerre spectrum  of the noise realization. Thus, the signal shape of the realization can be reconstructed and utilized for noise reduction. 
The accuracy of this reconstruction depends on how complex the signal shape of the disturbance realization is, that is how many Laguerre coefficients in a truncated Laguerre series of the signal it takes to achieve the desired result. Further, when  the Laguerre noise spectrum is random and constitutes a correlated sequence, the noise realization coefficients of higher order than $m$ can be predicted and enhance noise reduction even more

\subsection{Noise vector estimation}
Consider the particular case of (\ref{eq:noisy-delay-coeffs}) where the input is delayed in the Laguerre-domain. That is, the input signal is designed in the Laguerre-domain so that, for some $m \in \mathbb{N}_+$, it holds that $u_n = 0$ for all $n<m$
\begin{equation}\label{eq:delayed_Laguerre}
\!\!\!\!\hat{y}_j\!=\! \begin{cases} 
                         e_j, \quad j<m,\\
                         \sqrt{p}^\tau u_j + e_j, \quad j = m,\\
                        (1-p)\sum_{k=m}^{j-1}
                         L_{j-k}^{(\tau)}(\sqrt{p})u_k + \sqrt{p}^\tau u_j  + e_j, \; j> m.
            \end{cases}\!\!
\end{equation}
Then $\hat{y}_0, \dots, \hat{y}_{m-1}$ are equal to the distortion coefficients $e_0, \dots, e_{m-1}$, respectively. 

Let the matrix $\Xi_p$ be the submatrix of $\Psi_p$ given by its first $m$ rows. Then,
\[
\begin{bmatrix}
e_0 & \hdots & e_{m-1}
\end{bmatrix}^\intercal = \Xi_p Y_\text{meas.}
\]
An approximation of the \textit{whole} time-domain noise vector can then be obtained as
\begin{equation}\label{eq:noise_estimate}
\doublehat{E}_\text{time} = \begin{bmatrix}
\doublehat{e}(0) & \hdots & \doublehat{e}(T-1)
\end{bmatrix}^\intercal \triangleq\Phi_{m}(p)\Xi_p Y_\text{meas.}
\end{equation}
In other words, for $ t= 0, \dots, T-1$,
\[
\doublehat{e}(t) = \sum_{k=0}^{m-1}e_k \ell_k(t; p) = \sum_{k=0}^{m-1}\; [\Xi_p]_{k:}\, Y_\text{meas.}\; \ell_k(t; p) , 
\]
where $[\Xi_p]_{k:}$ denotes the $k$-th row of $\Xi_p$.

\subsection{Best linear estimate}
When the covariance matrix of the distortion vector $\El$ is non-diagonal and known, the best linear estimator (BLE) of $e_m, \dots, e_{L-1}$ in terms of $e_0, \dots, e_{m-1}$ can be readily computed. If, in addition, the measurement noise is Gaussian, the  estimator obtained is in fact mean squared error optimal. 

Partition the vector $\Ylh$ and the covariance matrix $\Sl$ as
\[
\Ylh =\begin{bmatrix} \Ylh^{(1)} \\ \Ylh^{(2)}\end{bmatrix}, \quad  \El =\begin{bmatrix} \El^{(1)} \\ \El^{(2)}\end{bmatrix}, \quad
\Sl = \begin{bmatrix} \Sl^{(11)} & \Sl^{(12)}\\ 
\Sl^{(21)} & \Sl^{(22)}\end{bmatrix},
\]
such that $\Ylh^{(1)}$ and $\El^{(1)}$  contain the first $m$ entries of their corresponding vectors, and $\Sl^{(11)}$ is the covariance matrix of $\El^{(1)}$; it is square and of size $m$. From (\ref{eq:delayed_Laguerre}), $\Ylh^{(1)}  = \El^{(1)}$, and the BLE of $\El^{(2)}$ is given as
\[
\begin{aligned}
\Elh^{(2)} =\begin{bmatrix}
\hat{e}_{m} & \hdots & \hat{e}_{L-1}
\end{bmatrix}^\intercal &= \Sl^{(21)}\left[\Sl^{(11)}\right]^{-1}\Ylh^{(1)}\\
&= \Sl^{(21)}\left[\Sl^{(11)}\right]^{-1} \Xi_p Y_\text{meas.}
\end{aligned}
\]
It has a covariance
\[
\text{\bf cov}\left( \Elh^{(2)} \right) = \Sl^{(21)}\left[\Sl^{(11)}\right]^{-1}\Sl^{(12)}
\]
and the covariance of the error is
\begin{equation}\label{eq:cov_ble}
\text{\bf cov}\left( \El^{(2)} - \Elh^{(2)} \right) = \Sl^{(22)} -  \Sl^{(21)}\left[\Sl^{(11)}\right]^{-1}\Sl^{(12)}.
\end{equation}
Then an estimate of the time-domain noise is given as
\[
\hat{e}(t) = \sum_{k=0}^{m-1}e_k \ell_k(t; p) + \sum_{k=m}^{L-1} \hat{e}_k \ell_k(t; p), \quad t = 0, \dots, T-1.
\]
or, equivalently,
\begin{equation}\label{eq:noise_estimate_mms}
\begin{aligned}
\Eth &=\Phi_L(p) \Elh =  \Phi_L(p) \begin{bmatrix}
\El^{(1)} & 
\El^{(2)} \end{bmatrix}^\intercal\\ 
&= \Phi_L(p)\begin{bmatrix} 
\Xi_p\\
\Sl^{(21)}\left[\Sl^{(11)}\right]^{-1} \Xi_p 
\end{bmatrix}Y_\text{meas.}\\
& = \doublehat{E}_\text{time} + \Theta_p Y_\text{meas.},
\end{aligned}
\end{equation}
where $\Theta_p = \Phi_{m+1:L}(p)\Sl^{(21)}\left[\Sl^{(11)}\right]^{-1} \Xi_p$ and $\Phi_{m+1:L}(p)$ denotes the last $L-m$ columns of $\Phi_{L}(p)$.
This is to be compared to (\ref{eq:noise_estimate}) where the coefficients $e_{m}, \dots, e_{L-1}$ are estimated using their (unconditional) expected value; namely zero. 

\subsection{Noise reduction}\label{sec:noise_reduction}
Suppose that an estimate  $\doublehat{E}_\text{time}$ is obtained according to  (\ref{eq:noise_estimate}). Then subtract the noise estimate from the measurement vector; viz. $\doublehat{Y}_\text{meas.} =  Y_\text{meas.} - \doublehat{E}_\text{time}$. 
Using (\ref{eq:noise_estimate}) and recalling that $Y_\text{meas.} = \Yt + \Et$ ,
\[
\begin{aligned}
\doublehat{Y}_\text{lag.} &= \Psi_p( Y_\text{meas.} - \doublehat{E}_\text{time}) = \Ylh -\Psi_p \doublehat{E}_\text{time}\\
&=\Psi_p \Yt + \Psi_p (I_T - \Phi_{m}(p)\Xi_p) \Et  = \Yl + \doublehat{E}_\text{lag.},
\end{aligned}
\]
where the equality before the last one holds because $\Xi_p\Yt = 0_{m\times1}$. Here, the  truncation errors are ignored, i.e., it is assumed that $\Yl = \Ylb$. Then it holds, $\Psi_p (I_T - \Phi_{m}(p)\Xi_p) \Et = \begin{bmatrix} 0 & \El^{(2)} \end{bmatrix}^\intercal$. Comparing to \eqref{eq:noisy-laguerre-spectrum}, where the distortion due to noise has a covariance as in (\ref{eq:lag_noise_cov}), the covariance matrix  is 
\[
\Psi_p (I_T - \Phi_{m}(p)\Xi_p) \St (I_T - \Phi_{m}(p)\Xi_p)^\intercal \Psi_p^\intercal .
\]
It is not difficult to see that this matrix is exactly equal to the submatrix of $\Sl$, in (\ref{eq:lag_noise_cov}), given by the last $L-m$ rows and columns
\[
\begin{bmatrix}
0 & 0\\
0 & I_{L-m}
\end{bmatrix} \Sl \begin{bmatrix}
0 & 0\\
0 & I_{L-m}
\end{bmatrix} ,
\]
where  $0$ denotes zero matrices of appropriate dimensions, and $I_{L-m}$ is the identity matrix with dimension $L-m$.
Thus,  no improvement in the signal-to-noise ratio in the Laguerre-domain is obtained (the signal, i.e., non-zero spectrum, starts at $m$). 

\subsection*{Noise reduction using BLE in Laguerre-domain}
Now suppose that the BLE $\Eth$ (\ref{eq:noise_estimate_mms}) is used instead. Then
\[
\begin{aligned}
{\Ylhs} &= \Psi_p( Y_\text{meas.} - \Elh) = \Ylh -\Psi_p \doublehat{E}_\text{time}\\
&=\Psi_p \Yt + \Psi_p (I_T - \Phi_{m}(p)\Xi_p) \Et - \Psi_p \Theta_p \Et \\
&= \Yl + \doublehat{E}_\text{lag.}- \Psi_p \Theta_p \Et\\
& = \Yl + \begin{bmatrix} 0 & \El^{(2)} - \Elh^{(2)}  \end{bmatrix}^\intercal.
\end{aligned}
\]
In the light of (\ref{eq:cov_ble}),  a reduction in the noise variance of ${\Ylhs}$ compared to $\doublehat{Y}_\text{lag.}$ or $\Ylh$ is guaranteed. 

To recapitulate the above results,  no noise reduction can be achieved in the Laguerre-domain if the distortion $\El$ is white. When, however, the distortion is correlated with a known correlation function, then the BLE can be used to improve the signal-to-noise ratio in the Laguerre-domain.

\section{Delay Estimation}\label{sec:delay_estimation}
In the rest of the paper, the results from Section~\ref{sec:estimation} are applied to the problem of delay estimation in Laguerre-domain. Whereas a discrete-time delay estimation problem is essentially system order estimation in time-domain, it can be formulated as a parameter estimation problem in Laguerre-domain. An additional argument for the use of Laguerre-domain is that the input in delay estimation applications is typically a finite pulse of a certain signal shape and readily lends itself to a Laguerre series representation.
\subsection{Algorithm}
A result from an earlier contribution \cite{M22} showed that the value of $\tau$ can be computed using three subsequent Markov parameters, $h_{m-1}, h_m$,  and $h_{m-1}$ in \eqref{eq:markov} for any value of $m\geq n+1$ using the formula
\[
\tau = - \frac{(m+1) h_{m+1} + (m-1) h_{m-1}}{ \beta h_m} - \frac{m\alpha}{\beta},
\]
where $n$ is the index of the first non-zero Markov parameter, $\alpha =  (\sqrt{p} + \sqrt{p}^{-1})$, $\beta =  (\sqrt{p} - \sqrt{p}^{-1})$, $h_k = \sum_{j=1}^k g_{k-j}y_j$,  and 
\[
g_n = \frac{1}{u_n}, \qquad g_{k} = -\frac{1}{u_n} \sum_{j=0}^{k-1} u_{k-j}g_j, \quad k\geq n+1, \quad u_n \neq 0.
\]
From this, and using the first $M$ non-zero Markov parameters, it is straightforward to see that the following equality holds
\[
a- b \tau =0,
\]
in which the column vectors $a$ and $b$ are defined as
\[
a \triangleq\Omega(\alpha) \begin{bmatrix}
h_{n+1}\\
\vdots\\
h_{M+n-1}
\end{bmatrix} + (M+n) \begin{bmatrix}
0\\
\vdots\\
0\\
h_{M+n}
\end{bmatrix}, \;\; \triangleq \beta \begin{bmatrix}
h_{n+1}\\
\vdots\\
h_{M+n-1}
\end{bmatrix},
\]
and $\Omega(\alpha)$ is the tridiagonal matrix
\[
\Omega(\alpha) \triangleq \begin{bmatrix}
\alpha & 2 & 0  & 0& \dots & 0 &0 \\
1 & 2\alpha & 3 & 0 &  \dots & 0& 0\\
0 & 2 & 3\alpha & 4 & \hdots & 0&0 \\
\vdots &  &  & & \ddots & \vdots & \vdots\\
0 &  0&  0 & 0& \hdots & (M-2)\alpha & (M-1)\\
0 & 0 & 0 & 0 &\hdots& (M-2) & (M-1)\alpha\\
\end{bmatrix}.
\]
Hence,  the delay is given by the closed-form formula for $$\tau = -\frac{b^\intercal a}{b^\intercal b}.$$
When the  Markov parameters are estimated $\{\hat{h}_k\}$ from noisy data, e.g., using (\ref{eq:noisy-delay-coeffs}) and ordinary least-squares, the following estimate is obtained
\begin{equation}\label{eq:delay_formula}
\hat{\tau} = -\frac{\hat{b}^\intercal \hat{a}}{\hat{b}^\intercal \hat{b}}.
\end{equation}

\subsection{Numerical Experiment}

A Monte Carlo numerical experiment is presented below to illustrate the performance of estimate \eqref{eq:delay_formula}. Three data sets are considered that correspond to the three noise models detailed in Section~\ref{sec:noise}. 

Let the true delay  value in (\ref{eq:ddelay}) be $\tau = 4$, and consider the following three noise models (NM)
\smallskip
\begin{itemize}
     \setlength{\itemindent}{0.5em} 
    \item[{NM1:}] $e_1(t)$ is a stationary Gaussian white noise with variance $\lambda = 0.3$;
    \item[{NM2:}] $e_2(t)$ is a stationary colored noise defined as
    \[
    e_2(t) = \frac{1}{q^2-0.9464 q + 0.7408} v(t),
    \]
    where $q$ is the time-domain shift operator, and $v(t)$ is a Gaussian white noise whose variance is such that the variance of $e_2(t)$ is, as $e_1(t)$, equal to $\lambda = 0.3$;
    \item[{NM3:}] $e_3(t)$ is given by (\ref{eq:random_Laguerre_process}) with $p_e = 0.5$, $K = 19$, such that $\text{\bf cov}(\begin{bmatrix} e_3(0) \dots e_3(T-1) \end{bmatrix}) = \text{\bf cov}(\begin{bmatrix} e_2(0) \dots e_2(T-1) \end{bmatrix}) $.
\end{itemize}
\smallskip
 In Laguerre-domain and,  for sufficiently large $T$, all the three models have the same marginal second-order properties, and the last two produce Laguerre-domain noise vectors $\El$ with the same covariance matrix. Yet the time-domain properties of {NM2} and {NM3} are different given that {NM3}  is non-stationary.

We ran a Monte Carlo simulation experiment for the three noise models and  the total number of simulations was $15\mathrm{e}^5$. The time-domain data set length was 300 samples, and all computations were made using the first $L = 20$ Laguerre functions with a Laguerre parameter $p=0.5$. The input was designed  to possess the Laguerre spectrum
\[
\Ul = [\underbrace{0 \dots 0}_{\text{15 zeros}} \;\; 3.1 \;\; 3 \;\; 0 \;\; 0\;\; 0]^\intercal,
\]
and kept fixed during all simulations. Thus, the first $15$ Laguerre coefficients of the noise-free outputs are identically zero. Only the coefficients from $y_{15}$ to $y_{19}$ were used to compute the least-squares estimate of the first $5$ Markov parameters yielding a delay estimate according to  \eqref{eq:delay_formula}.

The results of the experiment for each case is summarized in terms of the mean and the variance of the estimator in Table \ref{tab:results_without_noise_reduction}. The variances for the colored noise  cases  are larger than that for the white noise case. This is natural as all the noise models have the same marginal variance.

\begin{table}[!ht]  
  \centering
  \begin{tabular}{ccccc}
    \toprule
            & NM1       & NM2       & NM3     & $\tau$   \\ \midrule
    Mean    & 3.3807    & 3.2229    & 3.2234   &4  \\
    Var     & 0.8904    & 1.0827    & 1.0839   &  \\ \bottomrule
  \end{tabular}
    \caption{Mean and variance of $\hat{\tau}$ for the three noise models. }
  \label{tab:results_without_noise_reduction}
\end{table}

Table~\ref{tab:results_with_noise_reduction} shows the results obtained when the BLE is used to reduce the noise, as pointed out in Section \ref{sec:noise_reduction}, using the same exact data sets. As expected, there is no improvement in the case of the white noise model. However, the improvement in the mean value as well as the variance of $\hat{\tau}$ is clear in case of NM2 and NM3: the bias is reduced from $0.7771$ to $0.3080$ (about 60\% drop), and the variance is reduced from $1.0827$ to $0.5918$  (about 45\% drop). Interestingly, in Table~\ref{tab:results_with_noise_reduction}, the mean value rounds up to the true delay value. 
We also note that the results obtained using NM2 and NM3 are almost identical, despite NM3 being non-stationary. This is expected as was indicated earlier. 
Fig.~\ref{fig:estimated_noise_in_laguerre} and Fig.~\ref{fig:estimated_noise_in_time} show a realization of the BLE of the Laguerre distortion vector, and the reconstructed time-domain noise, respectively. Clearly, the signal shape of the noise realization is reconstructed closely only in case of NM3, for which Parseval's identity holds.

\begin{table}[!ht]  
  \centering
 \begin{tabular}{ccccc}
    \toprule
            & NM1       & NM2           & NM3         &$\tau$   \\ \midrule
    Mean    & 3.3807    & \bf 3.6920    & \bf 3.6925   &4  \\
    Var     & 0.8904    & \bf 0.5918    & \bf 0.5919   &  \\ \bottomrule
  \end{tabular}
  \caption{Mean and variance of $\hat{\tau}$ for the three noise model after noise reduction via the BLE. Both the bias and the variance are improved compared to the results in Table~\ref{tab:results_without_noise_reduction}. }
  \label{tab:results_with_noise_reduction}
\end{table}

\begin{figure}[h]
    \centering
    \includegraphics[width=0.47\textwidth]{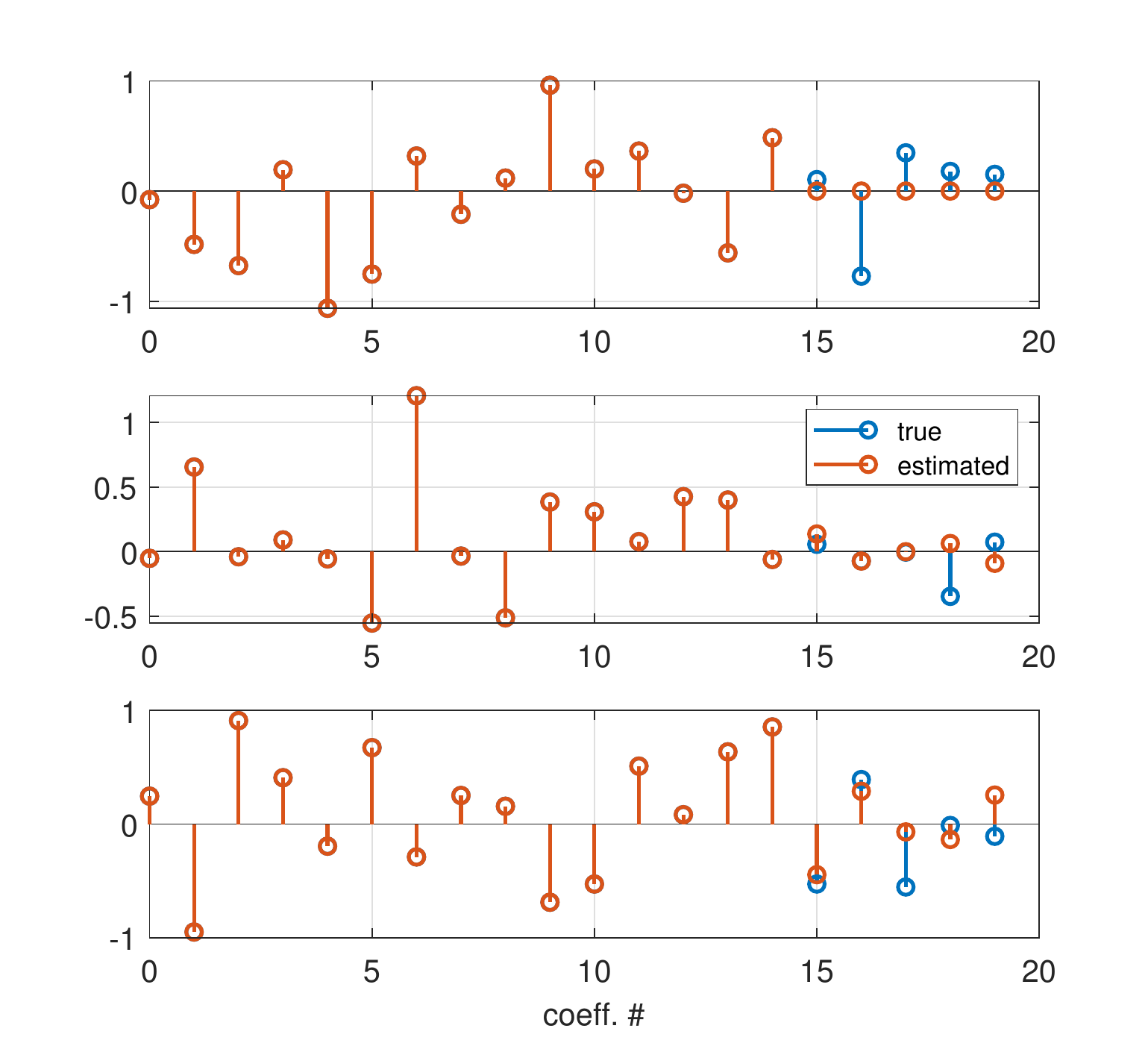}
    \caption{A realization of the BLE of the Laguerre distortion for the three noise models. Top panel: NM1, Middle panel: NM2, and Lower panel: NM3.  In the three cases, the "true" and 'estimated" first 15 coefficients coincide. For the case of NM1, the BLE of the last five coefficients is zero. The Laguerre-domain estimation errors (\(\|\El^{(2)} - \Elh^{(2)}\|^2\)) for these three realizations are: 0.7834 (NM1), 0.2009 (NM2), and 0.3993 (NM3). The corresponding noise vectors in time-domain are shown in Fig. \ref{fig:estimated_noise_in_time}. }
    \label{fig:estimated_noise_in_laguerre}
\end{figure}

\begin{figure}[h]
    \centering
    \includegraphics[width=0.47\textwidth]{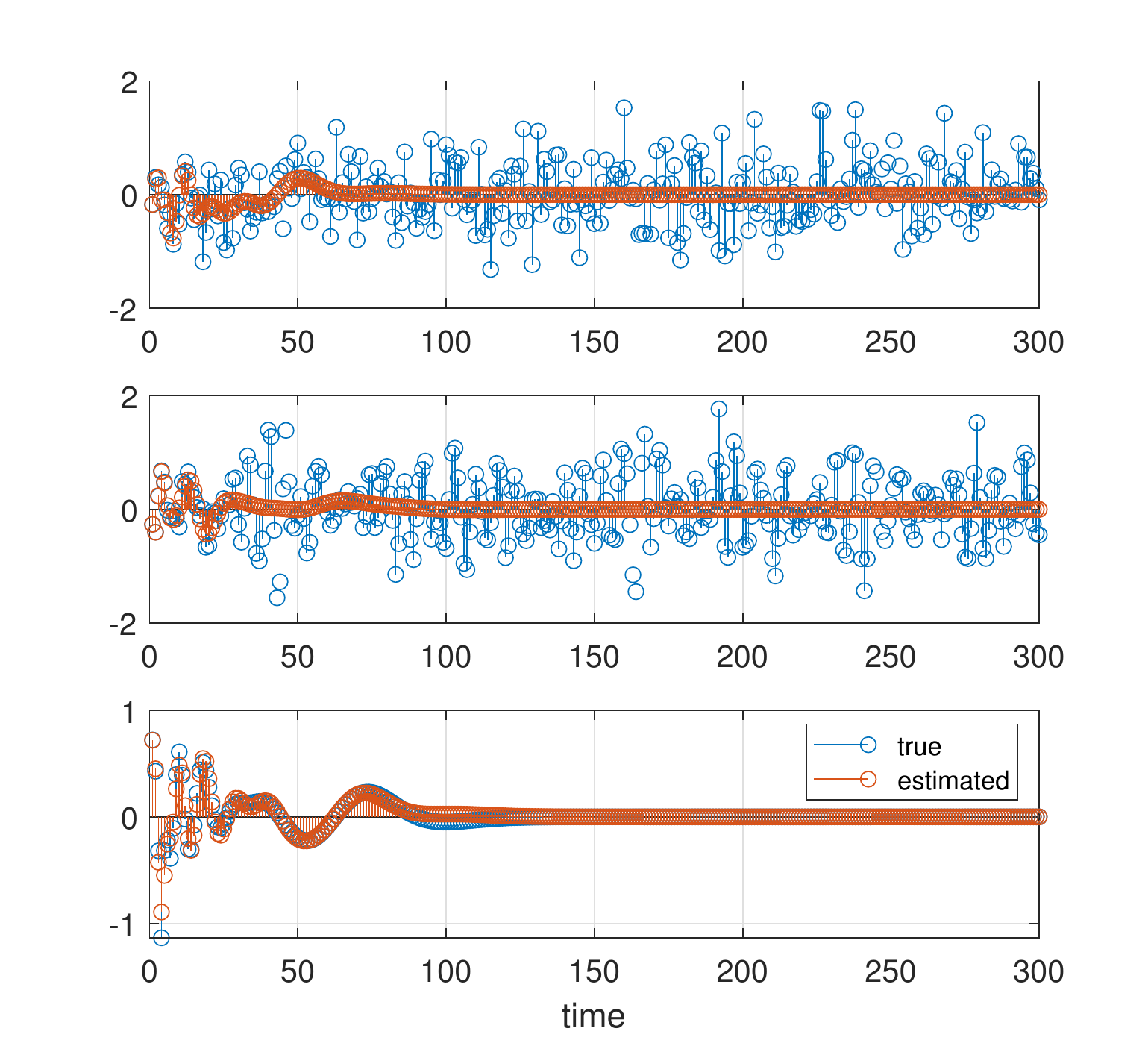}
    \caption{Examples of reconstructed time-domain noise vectors. Top panel: NM1, Middle panel: NM2, and Lower panel: NM3. The time-domain estimation errors (\(\|\Et - \Eth\|^2\)) for these three realizations are: 80.0112 (NM1), 92.6170 (NM2) and  0.3993 (NM3). 
    }
    \label{fig:estimated_noise_in_time}
\end{figure}

\section{Conclusions}
The implications of stochastic additive measurement noise on the accuracy of Laguerre-domain estimation are studied both analytically and via Monte-Carlo simulations. It is shown that, by selecting the input signal as a linear combination of higher-order Laguerre functions, the signal shape of the actual noise realization can be reconstructed from the spectrum of the output in the case of strongly correlated noise sequence. The reconstructed signal can be then used for noise reduction. The efficacy of the proposed approach is demonstrated with respect to a Laguerre-domain time delay estimation algorithm.

\bibliographystyle{IEEEtran}
\bibliography{bibliografi}

\begin{thebibliography}{10}
\providecommand{\url}[1]{#1}
\csname url@samestyle\endcsname
\providecommand{\newblock}{\relax}
\providecommand{\bibinfo}[2]{#2}
\providecommand{\BIBentrySTDinterwordspacing}{\spaceskip=0pt\relax}
\providecommand{\BIBentryALTinterwordstretchfactor}{4}
\providecommand{\BIBentryALTinterwordspacing}{\spaceskip=\fontdimen2\font plus
\BIBentryALTinterwordstretchfactor\fontdimen3\font minus
  \fontdimen4\font\relax}
\providecommand{\BIBforeignlanguage}[2]{{%
\expandafter\ifx\csname l@#1\endcsname\relax
\typeout{** WARNING: IEEEtran.bst: No hyphenation pattern has been}%
\typeout{** loaded for the language `#1'. Using the pattern for}%
\typeout{** the default language instead.}%
\else
\language=\csname l@#1\endcsname
\fi
#2}}
\providecommand{\BIBdecl}{\relax}
\BIBdecl

\bibitem{Fridman:2004}
E.~Fridman, ``Discrete-time delay systems,'' in \emph{Introduction to
  Time-Delay Systems. Systems \& Control: Foundations \& Applications}.\hskip
  1em plus 0.5em minus 0.4em\relax Cham: Birkh{\"a}user, 2004.

\bibitem{M92}
J.~Minkoff, \emph{Signals, Noise, and Active Sensors: Radar, Sonar, Laser
  Radar}.\hskip 1em plus 0.5em minus 0.4em\relax Wiley-Interscience, 1992.

\bibitem{W91}
B.~Wahlberg, ``System identification using {L}aguerre models,'' \emph{IEEE
  Transactions on Automatic Control}, vol.~36, no.~5, pp. 551--562, 91.

\bibitem{IHD01}
A.~Isaksson, A.~Horch, and G.~Dumont, ``Event-triggered deadtime estimation
  from closed-loop data,'' in \emph{American Control Conference}, Arlington,
  VA, 2001.

\bibitem{FM99}
B.~Fischer and A.~Medvedev, ``${L}_2$ time delay estimation by means of
  {L}aguerre functions,'' in \emph{Proceedings of the 1999 American Control
  Conference}, San Diego, CA, 1999.

\bibitem{BL03}
S.~Bj{\"o}rklund and L.~Ljung, ``A review of time-delay estimation
  techniques,'' in \emph{{IEEE} Conference on Decision and Control}, Hawaii,
  USA, December 2003.

\bibitem{Fischer:1998}
B.~Fischer and A.~Medvedev, ``Laguerre shift identification of a pressurized
  process,'' in \emph{Proceedings of the 1998 American Control Conference},
  vol.~3, 1998, pp. 1933--1937.

\bibitem{MBU20}
A.~Medvedev, V.~Bro, and R.~Ushirobira, ``Linear time-invariant discrete delay
  systems in {L}aguerre domain,'' \emph{{IEEE} Transactions on Automatic
  Control}, 2021.

\bibitem{M22}
A.~Medvedev, ``Time-delay estimation with non-persistent input,'' in
  \emph{Mediterranean Conference on Control and Automation}, Athens, Greece,
  2022, (submitted).

\bibitem{kay2013fundamentals}
S.~Kay, \emph{Fundamentals of Statistical Signal Processing: Practical
  algorithm development}, ser. Fundamentals of Statistical Signal
  Processing.\hskip 1em plus 0.5em minus 0.4em\relax Prentice-Hall PTR, 2013.

\bibitem{Ljung1999}
L.~Ljung, \emph{System Identification: Theory for the User}, 1999.

\bibitem{wong2012stochastic}
E.~Wong and B.~Hajek, \emph{Stochastic Processes in Engineering Systems}, ser.
  Springer Texts in Electrical Engineering, 2012.

\bibitem{NY82}
Y.~Nurges and Y.~Yaaksoo, ``Laguerre state equations for multivariable discrete
  systems,'' \emph{Autom. Rem. Control}, vol.~42, pp. 1601--1603, 1982.

\end{thebibliography}

\end{document}